\newcommand{\eat}[1]{}
\theoremstyle{definition} \newtheorem{theorem}{Theorem}[section]
\theoremstyle{definition} \newtheorem{definition}[theorem]{Definition}
\theoremstyle{definition} 
\theoremstyle{definition} 
\theoremstyle{definition} 
\theoremstyle{definition} 
\theoremstyle{definition} \newtheorem{example}[theorem]{Example}
\theoremstyle{definition} 
\theoremstyle{definition} 
\theoremstyle{definition} 
\begin{document}

\title{Fairness in Combinatorial Auctions}
\author{\name Sumanth Sudeendra \email sumanth.s@iiitb.net \\
       \addr International Institute of Information Technology - Bangalore,\\
                      Bangalore, India.\\
       \AND
       \name Megha Saini \email msaini1@iit.edu\\
       \addr Illinois Institute of Technology,\\ 
        Chicago, USA.\\
       \AND      
       \name Shrisha Rao \email srao@iiitb.ac.in \\
       \addr International Institute of Information Technology - Bangalore,\\
                      Bangalore, India.\\
}
\maketitle

\begin{abstract}
  The market economy deals with many interacting agents such as buyers
  and sellers who are autonomous intelligent agents pursuing their own
  interests.  One such multi-agent system (MAS) that plays an
  important role in auctions is the combinatorial auctioning system
  (CAS).  We use this framework to define our concept of fairness in
  terms of what we call as ``basic fairness'' and ``extended
  fairness.''  The assumptions of quasilinear preferences and dominant
  strategies are taken into consideration while explaining fairness.
  We give an algorithm to ensure fairness in a CAS using a Generalized
  Vickrey Auction (GVA).  We use an algorithm of Sandholm to achieve
  optimality.  Basic and extended fairness are then analyzed according
  to the dominant strategy solution concept.
\end{abstract}

\textbf{Keywords:} fairness, optimality, multi-agent systems,
combinatorial auctions, mechanism design

\section{Introduction} \label{intro}

The term ``auction'' refers to a mechanism of allocating single or
multiple resources to one or more agents (or bidders)~\cite{23}.  In
recent years, computer scientists, rather than just economists, are
interested in auctions.  The increase in computing power and improved
algorithms have paved the way for combinatorial auctions.  Here
multiple items are for sale by the auctioneer and bidders can bid for
a bundle of items (also called packages).  In a multi-agent system
(MAS), we consider these bidders and the auctioneer as autonomous
agents who act in a self-interested manner in their dealings with one
another.  Similarly, even in MAS dealing with resource allocation
other than by auction, there are self-interested autonomous
agents~\shortcite{4,19}.  We study a framework where optimality is a
desirable property but fairness is a required property.  An excellent
example of such a framework is a combinatorial auctioning system (CAS)
where the two most important issues pertaining to resource allocation
are optimality and fairness.  A CAS is a kind of MAS whereby the
bidders can express preferences over combinations of
items~\cite{15,13}.

We assume in this paper that an agent's valuation of an item does not
change based on other agents' private information (i.e., some evidence
which affects the valuation of an agent), that utilities are
quasilinear (i.e., utility is linear in terms of money), and that
there are no externalities (i.e., an agent that does not win an item
neither cares which other agent wins it, nor worries about how much
other agents pay for it)~\cite{23}.  This is realistic, as seen for
example in relation to the Nigerian Communications Commission auction
described below, where, ``The decision to charge bidders what they bid
was accepted by bidders and observers as fair and transparent despite
the difference in some of the payments for identical
licenses.''~\cite[p. 30]{29}  In such scenarios, each agent holds
different preferences over the various possible allocations and hence
concepts like individual rationality, fairness, optimality,
efficiency, etc., are important~\shortcite{7}.

We introduce the concept of fairness in the auction mechanism.
Although the notion of fairness is of course well known in general, it
does not seem to have been clearly defined with respect to auction
processes in particular.  We propose two types of fairness, namely
\emph{basic fairness} and \emph{extended fairness}.  We explain basic
fairness using the concept of equitable distribution along with the
respective preferences.  Extended fairness is explained such that
envy-freeness prevails in the allocation and the entire resource is
allocated to the winning bidder.  We introduce a \emph{fairness table}
consisting of fair values as perceived by bidders and auctioneer; this
is sealed at the start of the bidding process.  We give emphasis to
fairness, unlike the classical approach where revenue maximization is
the only goal required in auctions.  To achieve fairness, the proposed
algorithm explains a novel payment scheme which is applied at the end
of the bidding process, where we determine the final amount payable to
the auctioneer by the winning bidder.  We ensure that this process is
considered to be fair by both bidders and the auctioneer by means of
extended fairness.  We handle the special case of a tie in the bidding
process using equitable distribution, and ensure that basic fairness
is achieved.  The mathematical formulations of fairness concepts in
combinatorial auctions are explained, and a detailed analysis is
presented to highlight some of the properties exhibited by our payment
scheme.

In our mechanism, there are self-interested bidders and an auctioneer,
who express their perceptions of the fair value of the resources
through a data structure called the fairness table.  Here an
auctioneer acts as a facilitator to ensure that an item achieves its
fair value.  We consider optimality as the desired property and
fairness as the required property.  We illustrate this using a
combinatorial auction framework, in which multiple items are
simultaneously up for sale and a bidder can bid for any bundle of
items.  The optimal allocation of resources is discussed using an
algorithm of Sandholm where we obtain the winning bidders.  The
incentives to the winning bidders are provided through the Generalized
Vickrey Auction (GVA) and Algorithm~\ref{Algorithm1}. We apply the
fairness concept using the fairness table and
Algorithm~\ref{Algorithm1}.

The auctioning of the electromagnetic spectrum is one of the
well-known applications of combinatorial auctions.  The first-ever
combinatorial auctioning of the radio spectrum was held in Nigeria in
2002~\cite{29}.  A single-round sealed-bid combinatorial auction (not
the Simultaneous Multiple Round Auctions (SMRA) used in other
countries) was conducted for regional fixed wireless access (FWA)
licenses; the decision for this format was made as SMRA was
impractical in Nigeria due to its insufficient communication
infrastructure, and as the NCC did not wish to have a lengthy auction.
Some 67 out of the 80 licenses available were allocated, with
successful bids amounting to 3.78 billion naira (38 million US
dollars).  Here, the complimentarity and substitutability of licenses
were the important factors for choosing a combinatorial auction.  The
cost of the allocation process was an important factor, and the
Nigerian Communications Commission (NCC) did not want to discourage
smaller bidders, with its primary goal being efficiency and
transparency.

We find in the case of the combinatorial auction conducted by NCC the
following problem: ``The final choice of auction design rested heavily
on information revealed about the regional structure of demand from
initial applications.  It was therefore critical that the application
process created incentives for bidders to reveal such
information.''~\cite[p. 24]{29} This problem is taken care of in our
approach as we introduce the fairness table which is to be populated
at the beginning of the auctioning process.  This table allows us to
see the regional structure of demands through the fair values assigned
to the resources.  We provide higher rewards for bidders who
truthfully give their fair values, by Theorems~\ref{Thm1}
and~\ref{Thm2}.  

It was also observed by the NCC that bidders defaulted on their
winning bids in a significant number of cases, though not enough to
undermine the overall process~\cite{29}.  We can decrease the number
of bidders defaulting provided we satisfy Theorem~\ref{Thm2}.  This
makes it less rewarding for bidders to bid way beyond their capacity,
which in turn decreses the possibility of winning bidders defaulting.

Since the importance here is on transparent and fair allocation, we
can apply our method to ensure fairness in combinatorial auctions.  We
start with introducing some of the related work in
Section~\ref{related}.  Next, we explain different notions of fairness
with formal definitions in Section~\ref{definitions}.  This is
followed by our study of CAS in Section~\ref{fairCAS};
Algorithm~\ref{Algorithm1} in Section~\ref{algo} is used to extend the
payment scheme to achieve fairness in CAS with an example.
Section~\ref{analysis} gives a detailed analysis of fairness using
mechanism design under quasilinear settings~\cite{22}.  We conclude
with Section~\ref{conclusion}, which offers some conclusions about our
efforts, and some suggestions for further work along these lines.

\section{Related Work} \label{related}

In this section, we review different definitions of fairness as they
have been proposed in the multi-agent literature.  The problem of fair
allocation is being resolved in various MAS by using different
procedures, depending upon the technique of allocation of goods and
the nature of goods.  Its welfare implications in different systems
were explored by Rabin~\cite{16}.  Brams and Taylor give an analysis
of procedures for allocating divisible and indivisible items and for
resolving disputes among the self-interested agents~\cite{3}.  One of
the procedures described by them is the ``divide and choose'' method
of allocation of divisible goods among two agents to ensure the fair
allocation of goods which also exhibits the property of
``envy-freeness,'' a property first introduced by Foley~\cite{10}.
Lucas' method of markers, and Knaster's method of sealed bids are
described for MAS comprising more than two players and for the
division of indivisible items.  The adjusted-winner (AW) procedure is
also defined by Brams~\cite{2} for envy-freeness and equitability in
two-agent systems.  Various other procedures like moving-knife
procedures for cake cutting are defined for the MAS comprising three
or more agents~\cite{2,1}.

The auction mechanism proposed by Biggart~\cite{24} provides an
economic sociology perspective.  There, fairness can mean different
things for bidders and auctioneer.  The auctioneer may consider a
process fair which in fact only gives him the maximum revenue, whereas
the bidders may consider a process fair which only gives the
auctioneer the least return on all items.  The most important
consideration overall is to sustain the community's faith in the
fairness of the process.  This does not mean that buyers and sellers
cannot press their advantage, but they are allowed to do so only
insofar as the community as a whole considers their actions
appropriate and acceptable.

A concept of verifiable fairness in Internet auctions has been
proposed by Liao and Hwang~\cite{25}.  This was to promote trust in
Internet auctions.  The scheme proposed provides evidence regarding
policies implemented so that the confidence of bidders increases and
they consider it to be fair.  Most of these auctions see transparency
in the auctioning process and rules as the basis for ensuring fairness
in the system, but clarity regarding fairness still remains wanting.

The Nash bargaining concept is used by many economists.  In Nash
bargaining, there is no particular winner against a bargain.  If the
amount requested is within the total amount available at the owner
then they get their share, but if the demand is more then they get
nothing.  In our case, this is not the case as we have a winner in all
circumstances even if the auctioneer is facing a loss.  Also the
extended fairness concept is not present in Nash bargaining to acquire
a desired product whereas in our case bidders pay a price to achieve
extended fairness.

The game-theoretic concept of Shapley value~\cite{27} describes the
fairest allocation of collectively-gained profits between several
collaborative agents.  This is one approach used in coalitional games.
Though this deals with fair allocation, it is restricted to a
mechanism where the actors contribute in a coalition.  The profits
obtained are allocated in a fair manner.  The Shapley value is
different from our approach, as we do not take into account prior
understanding or coalition among the bidding agents in our discussion.

Fairness as a collective measure has been considered by
Moulin~\cite{26}, who proposes aggregate or collective welfare which
is measured in terms of an objective standard or index that assumes
equivalence between this measure and a particular mix of economic and
non-economic goods which gives happiness to a varying set of
individual utility functions.  This tries to capture social welfare
and commonwealth to be incorporated into every individuals' happiness
equations.  Though debatable, it provides an excellent introduction to
the concept of fairness.

A Distributed Combinatorial Auctioning System (DCAS) consisting of
auctioneers and bidders who communicate by message passing has been
proposed~\cite{30}.  Their work uses a fair division algorithm that is
based on DCAS concept and model.  It also discusses how basic and
extended fairness implementations may be achieved in distributed
resource allocation.

The fair package assignment model proposed by Lahaie and
Parkes~\cite{36} is defined on items having pure complements or super
additive valuations.  This model does not address combinatorial
package assignments which involve both complements and substitutes in
general.  Their model provides fairness to a ``core'' which contains a
set of all distributions which are considered competitive---no
fairness is posited for other distributions.  Hence the bidders whose
distributions lie outside the core do not get the benefits of fair
assessment.  In the case of multiple-round combinatorial auctions, for
example, bidders whose bids are not in the core during earlier rounds
are not in contention in later ones.  This scheme seems unfair in a
fundamental way, as it effectively discriminates against bidders who
cannot make it into the core.  In our model only truthfulness in
bidding is considered, and no bidders are distinguished based on
whether their bids lie inside or outside a putative core.

However, the term \emph{fairness} is defined differently in various
MAS with regard to the resource allocation.  In some MAS, it can be
defined as equitable distribution of resources such that each
recipient believes that he receives his fair share.  Thus, no agent
wants somebody else's share more than its own share; such division is
therefore also known as envy-free division of resources~\cite{2}.
Thus fair allocation is achieved if it is efficient, envy-free and
equitable~\cite{3}.

\section{Definitions of Fairness} \label{definitions}

Our additional notions of fairness in various MAS are basic fairness
and extended fairness.  This section defines the various notions about
fairness in combinatorial auctions in a MAS.

\subsection{Terminology} \label{term}

Let our CAS be a MAS which is defined by the following entities:

\begin{enumerate}

\item The total number of resources is represented by $m$ and the
  total number of bidders by $n$.

\item The set \(R= \{r_0, r_1, r_2, \ldots, r_{m-1}\}\) is a set of
  $m$ resources $r_i$, and \(2^R\) denotes the power set of $R$.

\item The set \(B= \{b_0, b_1, b_2, \ldots, b_{n-1}\}\) is a set of
  $n$ bidders $b_j$.

\item $a$ is the auctioneer who initially owns all the resources.

\item A package $S$ is some subset of the set of resources, i.e., \(S
  \subseteq 2^{R}\).

\item \(\mathbb{R}\) is the set of real numbers.

\end{enumerate}

For instance, consider a CAS that comprises three bidders $b_0$,
$b_1$, $b_2$, an auctioneer denoted as $a$, and three resources $r_0$,
$r_1$ ,$r_2$.  Each bidder is privileged to bid upon any combination
of these resources.  We denote the combinations or subsets of these
resources as \{$r_0$ \}, \{$r_1$\}, \{$r_2$ \}, \(\{r_0 , r_1\}\),
\(\{r_0 , r_2\}\), \(\{r_1 , r_2\}\), \(\{r_0 , r_1 , r_2\}\).

\begin{example} \label{ex3} 

  A package for a bidder winning the subsets \{$r_0$\} and \{$r_1$\}
  is defined as \(\{\{r_0\}, \{r_1\}\}\).

\end{example}

We also consider the concept of weight while assigning the fair value.
Here, weight is not the physical weight but is used as a
multiplicative factor for describing the desirability of the package
by the bidder.  If a higher weight is assigned to a package, then it
will result in a higher fair value.  This expresses the well-known
fact that a bidder is likely to assign a higher fair value to a
resource that is desired or needed than to one that is not, even when
the two resources have the same intrinsic value (e.g., a starving man
is likely to assign a far higher value to a meal than to any other
commodity of equivalent intrinsic worth).

\begin{definition} \label{def3}

  Let us define some important terms used in our later discussion, as
  follows:

\begin{enumerate}

\item The \emph{initial value} of an item is defined as \(\Omega: B \times R
  \rightarrow \mathbb{R}\), where \(\Omega(b_i,r_j)\) is the initial
  amount attached by bidder $b_i \in B$ to a resource $r_j \in R$.

\item The \emph{weight}  a package is defined as \(\Theta: B \times 2^{R}
  \rightarrow \mathbb{R}\), where \(\Theta(b_i,S)\) is the weight for
  bidder $b_i \in B$ of package $S$.

\item The \emph{fair value} for a resource is defined as \(\Pi: B
  \times R \rightarrow \mathbb{R}\), where \(\Pi(b_i,r_j) =
  \Theta(b_i,{r_j}) \times \Omega(b_i,r_j)\) for a bidder $b_i \in B$
  on resource $r_j \in R$.
   
  The fair value for a package is defined as \(\Pi: B \times 2^{R}
  \rightarrow \mathbb{R}\), where \(\Pi(b_i,S)\) is the value obtained
  as \(\sum \Pi(b_i,r_j), \forall r_j \in S \).

\item The \emph{bid value} of a package is defined as \(\Upsilon: B
  \times 2^{R} \rightarrow \mathbb{R}\), where \(\Upsilon(b_i,S)\) is
  the amount that the bidder $b_i \in B$ is willing to give in
  exchange for the package $S$.

\item The \emph{utility value} of a package is defined as \(\Gamma: B
  \times 2^{R} \rightarrow \mathbb{R}\), where \(\Gamma(b_i,S) =
  \Upsilon(b_i,S) - \Pi(b_i,S)\). 

\item The \emph{package cost} is defined as \(\Psi: B \times 2^{R}
  \rightarrow \mathbb{R}\), where \(\Psi(b_i,S)\) gives the final
  winning amount for bidder $b_i$ on package $S$ after the bidding has
  ended.

\end{enumerate}  

\end{definition}

\eat{

(The above definitions are not connected into a single equation as the 
are independent entities which do not interact together at the same 
time as discussed in later sections.)\\

}

 Assume that the auctioneer and each bidder all have fair values for
each of the individual resources (say, in dollars) as shown in
Table~\ref{tab:fairtable}.  Every bidding process will have a base
value initially assigned to an item from where the bidding proceeds.
The fair values by a bidder and an auctioneer for each resource
represent their measures of its actual value, and depend on their
weights and their initial values (Definition~\ref{def3}).  Thus, a bidder
is willing to consider a resource at his fair value.  Similarly, the
auctioneer is willing to sell a resource at \textsl{his} fair value.
However, bid value may be higher or lower than fair value and hence result 
in higher or lower utility values (Definition~\ref{def3}) depending on the 
need of the resource. Fair value for a combination of resources in the fairness 
table can be calculated as the sum of the fair value for each of the resources in
that combination (fair values are considered additive where as the bid values 
are combinatorial in nature and not additive).

\begin{table}[ht]
\centering
\begin{tabular}{c c c c} \toprule
\multicolumn{4}{c}{Example Fairness Table} \\ \cmidrule(r){1-4}
& Resource $r_0$ & Resource $r_1$ & Resource $r_2$\\ \midrule
Bidder $b_0$ &5  &8 &8 \\ 
Bidder $b_1$ &10  &2 &8\\
Bidder $b_2$ &10  &5 &10\\
\addlinespace
Auctioneer $a$ &8  &10 &15\\ \bottomrule
\end{tabular}
\caption{Fair valuations for each resource by all bidders and auctioneer}
\label{tab:fairtable}
\end{table}

From Table~\ref{tab:fairtable}, we can see that bidder $b_0$ values
resource $r_0$ at \$5, $r_1$ at \$8 and $r_2$ at \$8.  This means that
bidder $b_0$ is willing to pay \$5 for $r_0$, \$8 for $r_1$, and \$8
also for $r_2$; $b_0$ believes that no loss is incurred by the
auctioneer in this trade.  The fair value for the subset \(\{r_0 , r_2
\}\) for the bidder $b_0$ is calculated as the sum of his the fair
values for $r_0$ and $r_2$, i.e., \$5 + \$8 = \$13.  Similarly, the
fair value for a package is the sum of the fair values of the
comprising sets (Definition~\ref{def3}), i.e., for a package
\{$r_0$\}, \(\{r_1 , r_2\}\), the fair value is the sum of the fair
values of \{$r_0$\} and \(\{r_1 , r_2\}\).

A bidder participates in the bidding process by quoting his bid for
the packages.  Let the bids raised by the bidders for the individual
resource and different combination of resources be as given in Table
$2$. It can be seen that the bids raised by each of the bidder for
different sets of resources may or may not be equal to the fair value
of the respective set of resources. This is because the combinations
may be complimentary or substitutes.

A bidder is considered to make bid zero for any sets of resources he
does not wish to procure.

\begin{table}[ht]
\centering
\begin{tabular}{c c c c c c c c} \toprule 
\multicolumn{8}{c}{Example Bid Table} \\ \cmidrule(r){1-8}
& $r_0$ &$r_1$ &$r_2$ &\{$r_0$ , $r_1$\} &\{$r_0$ , $r_2$\}  &\{$r_1$ , $r_2$\} &\{$r_0$ , $r_1$ , $r_2$\}\\ \midrule
Bidder $b_0$ &0 &10 &5 &10 &20 &15 &50\\ 
Bidder $b_1$ &10 &5 &10 &30 &0 &0 &50\\
Bidder $b_2$ &10 &0 &15 &20 &30 &15 &30\\	\bottomrule	
\end{tabular}
\caption{Bids raised by the bidders for different combination of resources}
\label{tab:bidtable}
\end{table}

With this terminology we proceed to explain fairness in subsequent
sections.

\subsection{Basic Fairness} \label{Basic}

In many MAS, there occurs a need of allocating the resources in an
equitable manner, i.e., each agent gets an equitable share of the
resources.  Such allocations leave the agents with a feeling that they
have received a fair share~\cite{3}.  For example, if we consider a
method that would leave two agents feeling as if they had received
$60$\% of the good then we would call it equitable.  If one felt to be
favored and had received $80$\% while the other agent believed to have
received $60$\% then it would not be equitable~\cite{3}.  This is
quite difficult to access and tends to quite subjective in many cases.
We give a mechanism where this applies only in case of a tie, hence we
consider a divisible resource which does not lose its value upon
division and divide it equitably among bidders in proportion to their
assigned weights.  Each agent has a set of allocations he deems fair.
An allocation is then is said to achieve basic fairness in resource
allocation if all agents deem it fair.

Each bidder $b_i$ wants to maximize his chances of procuring the resource and individual 
utility given by \(\Gamma(b_i,S)\) represents the satisfaction of obtaining the resource.  
The most simple approach is that the satisfaction
of a bidder does not depend on other bidders' satisfactions.  The
representation below considers that a package $S$ is divisible and can
be divided equitably among $n$ bidders in proportion to their utility values.

The resource can be divided equitably in the ratio: \(\Gamma(b_i,S) / \sum_{i=1}^{n}\{\Gamma(b_i,S)\}\), 
where weights are set freely by agents.

\begin{definition} \label{def1}

  If each bidder $b_i$ has a utility for a package $S$ given
  by \(\Gamma(b_i,S)\), and the package $S$ can be divided equitably
  among $n$ bidders in the ratio \(\Gamma(b_i,S) / \sum_{i=1}^{n}\{\Gamma(b_i,S)\}\), 
  then basic fairness is said to be achieved.

\end{definition}

\begin{example} \label{ex1}

  Consider there to be three bidders for the divisible package
  $S$.  The bidders' bid values, fair values and utility values 
  are shown in Table~\ref{tab:basicfair}.

\begin{table}[ht]
\centering
\begin{tabular}{c c c c} \toprule
\multicolumn{4}{c}{Utility and Weight Table} \\ \cmidrule(r){1-4}
& Bidder $b_0$ & Bidder $b_1$ & Bidder $b_2$\\ \midrule
Bid Value &$24$  &$16$ &$20$ \\ 
Fair Value &$18$  &$12$ &$16$ \\ 
Utility Value &$6$  &$4$ &$4$ \\\bottomrule
\end{tabular}
\caption{Example to demonstrate basic fairness}
\label{tab:basicfair}
\end{table}

The calculations of ratios are done as shown below.

For bidder $b_0$, $6$/$14$ = $0.43$.

For bidder $b_1$, $4$/$14$ = $0.285$.

For bidder $b_2$, $4$/$14$ = $0.285$.

If the winning amount is \$$100$ then it is divided in the ratio
$0.43 : 0.285 : 0.285$ to achieve basic fairness, i.e., bidder $b_0$ has
to pay \$$43$, bidder $b_1$ has to pay \$$28.50$, bidder $b_2$ has
to pay \$$28.50$.
\end{example}

This method of equitable allocation ensures that all agents deem the allocation
to be fair.  Therefore, we say that every agent believes that the set
of resources is divided fairly among all the agents.  This concept of
fairness is termed as basic fairness.

This kind of fairness is required in applications wherein fairness is
the key issue, rather than the individual satisfactions of the
self-interested agents.  In such applications, it becomes necessary to
divide a package in an equitable fashion so that every agent believes
that it is receiving its fair share from the set of resources. Hence,
we see that every agent enjoys material equality and this ensures
basic fairness among them.


\subsection{Extended Fairness} \label{extended}

In order to ensure egalitarian social welfare~\shortcite{8}, basic
fairness is alone not sufficient. We also need to address
envy-freeness~\cite{2}.  Envy-free allocations result in each agent
being at least as happy with its share of the goods as it would be
with any of the other agents shares despite the difference in some
payments for identical goods~\cite{3,29}.  Here, we need to ensure
that the allocation is perceived by all agents to be a fair
allocation.

In a MAS, every agent assigns a fair value to each resource that
determines its estimate of the value of the resource in quantitative
terms.  The fair value attached to each resource can be expressed in
monetary terms in most MAS.  Here, the agent believes that the
allocated resource is fair if he receives the entire allocation and
the value is according to his fair estimate.  

However, it is important
to mention that the fair value attached to each resource by an agent
does not necessarily reflect the bid value of the resource.  An
agent may hold a higher or lower bid value for a resource
irrespective of the fair value attached to the resource.  Rather, the
fair value attached to a resource is an estimate of the actual value
of the resource in the system as perceived by an agent in quantitative
terms.  It means that an agent is always willing to trade a resource
at its fair value.

Let there be $k$ bidders who bid for a package $S$.  Let each bidder
$b_i$ and auctioneer $a$ (who is here only as a facilitator for
achieving the items' fair value) give their fair values in the
fairness table (as in the example in Table~\ref{tab:fairtable}), which
is open for all to see at the \textsl{end} of the bidding process.

\begin{definition} \label{def4}

  Let us define some of the terms used in our discussion.

\begin{enumerate}

\item $C$ is defined as the winning amount after the bidding process
  for a package $S$.
\item \(\xi: {a} \times 2^{R} \rightarrow \mathbb{R}\) defines the
  fair value of the auctioneer $a$ for a package $S$ denoted by
  \(\xi(S)\).
\item \(\Pi(b_i,S)\) is defined as the fair value of the bidder $b_i$
  for a package $S$.
\item The \emph{profit} denoted by $\Phi$ is defined as the net amount
  $C$ above the fair value of the auctioneer \(\xi(S)\) given by the
  bidder $b_i$ for the package $S$ and is calculated as difference
  $C-\xi(S)$.
\item The function \(distribute\) is defined as the amount \(x: B \times 2^{R}
  \rightarrow \mathbb{R}\) to be given back to the losing bidders $b_i$ who bid for the winning package $S$.
\item The value $reward$ is defined as \(reward: B \times 2^{R}
  \rightarrow \mathbb{R}\), where \(reward = \Phi - x\).

\end{enumerate}

\end{definition}

Now let us define extended fairness in resource allocation.

\begin{definition} \label{def2}
  An allocation is said to satisfy extended fairness, if when a winning bidder $b_i$ is allocated a
  package $S$: (i) if \(\Upsilon(b_i,S) > \xi(S)\), then a losing bidder $b_j$ is rewarded \(\Phi \times \left(\frac{\Pi(b_j,S)-\xi(S)}{\xi(S)}\right)\); and (ii) if \(\Upsilon(b_i,S) \leq \xi(S)\), then no one gets a reward.

\end{definition}

Consider the following scenarios:
\begin{enumerate}

\item The auctioneer makes a profit more than his fair value assigned
  initially for that package $S$.  He distributes the profit among the
  losing bidders in proportion to their fair values for that package
  $S$ as follows:

  Let $C$ be the winning bid which is greater than the fair value of
  the auctioneer, i.e., $\xi(S)$.  Therefore, \(\Phi = C-\xi(S)\) by
  Definition \ref{def4}.  The profit to be distributed for each losing
  bidder $i$ is calculated by:

\[distribute \left(\Phi \times \left(\frac{\Pi(b_i,S)-\xi(S)}{\xi(S)}\right)\right)\]

Now, the incentive for the winning bidder is \(reward = \Phi -
distribute\left(\Phi \times \left(\frac{\Pi(b_i,S)-\xi(S)}{\xi(S)}\right)\right)\).  Thus, the
auctioneer has obtained his fair value and hence considers this
allocation as fair.  All the bidders get amounts according to their
fair values, which makes them envy free.

\item The auctioneer gets a winning bid $C$ which is exactly the same
  as the fair value $\xi(S)$ associated with the package $S$.  Now the
  profit is zero. Therefore, the auctioneer has obtained his fair
  value and hence considers this allocation as fair.  All the bidders,
  though did not get any reward consider this allocation as envy-free
  as auctioneer too did not make any profits more than his own fair
  value.

\item The auctioneer gets a winning bid $C$ less than the fair value
  $\xi(S)$ attached by him for the package $S$.  In this case, we try
  to minimize his loss as follows:

  If the fair value given by bidder $\Pi(b_i,S) \geq \xi(S)$, then
  bidder $b_i$ pays $\xi(S)$.  Thus, the auctioneer has no loss as he
  gets his fair value and the bidder too is envy free since he
  considers that paying his fair value as fair.  The other bidders are
  still envy free since the amount paid by the winning bidder is more
  than he actually won in the bidding process.

  If the fair value given by bidder $\Pi(b_i,S) < \xi(S)$ and
  $\Pi(b_i,S) \leq C$, then the payment does not change and he pays
  $C$, else he pays $\Pi(b_i,S)$.  Only in this case auctioneer fails
  to get his fair value and the bidder does not get the distributed
  profit amount.  The allocation is still envy-free for all the
  bidders but not for the auctioneer.  This can be avoided if both
  bidder and auctioneer remain truthful in their fair values.

\end{enumerate}

\begin{example} \label{ex2}

  An example of such a system can be explained with a scenario of
  auctioning of a painting. The contending bidders express their fair
  values through their sealed bids that is submitted to the
  auctioneer, i.e., each contending bidder believes that his quotation
  fulfills the value expected by the auctioneer and he is a
  competitive contender for the painting. We assume here that all
  bids are truthful. An unbiased auctioneer selects the bid which is
  the maximum for revenue maximization and the painting is allotted to
  him. Here the auctioneer would distribute the profits among losing
  bidders when he gets back his fair value. This takes care of
  envy-freeness. Hence, the allocation is perceived to be fair by the
  winning bidder and by all other bidders as it is allocated to the
  most deserving among all the bidders. The auctioneer also perceives 
  this to be fair since he will obtain the fair value for the
  resource. Thus all participants perceive the allocation to an agent
  to be fair irrespective of the fair values attached by
  them. Therefore, extended fairness is said to be achieved.

\end{example}

To make these notions of fairness mathematically precise, we need a
framework where fairness is a required property in resource
allocation. However, we also see that resource allocation deals with
another key issue of optimality in various MAS. The best example of
resource allocation framework where both optimality and fairness are
the key issues is Combinatorial Auctioning Systems (CAS).

\section{Fairness in Combinatorial Auctioning Systems
  (CAS)} \label{fairCAS}

Combinatorial Auctioning Systems are a kind of MAS which comprise an
auctioneer and a number of self-interested bidders. The auctioneer
aims at allocating the available resources among the bidders who, in
turn, bid for sets of resources to procure them in order to satisfy
their needs. The bidders aim at procuring the resources at minimum
value during the bidding process, while the auctioneer aims at
maximizing the revenue generated by the allocation of these
resources. Thus, CAS refers to a scenario where the bidders bid for
the set of resources and the auctioneer allocates the same to the
highest-bidding agent in order to maximize the revenue. Hence, we see
that optimality is one of the key issues in CAS.

An algorithm of Sandholm is used here to attain optimal allocation of
resources.  Sandholm proposes various methods for winner determination
in combinatorial auctions~\cite{18}. The search methodology can be
used to obtain optimal allocation of resources. We can represent the
Table \ref{tab:bidtable} as a Bid tree using an algorithm of
Sandholm~\cite{18}. We can also carry out some preprocessing steps to
make the steps faster without compromising the
optimality~\cite{13,18}. Thus we can determine the winning bidders.

However, besides optimality, another key issue desired by some
auctioning systems is fairness. To incorporate this significant
property in this resource allocation procedure, we propose an
algorithm which uses the concept of extended fairness for each agent
with basic fairness in case of a tie and determines the final payment
made by the winning bidders.

The algorithm that we describe is based upon a CAS that uses an
algorithm of Sandholm for achieving optimality, and an incentive
compatible mechanism called Generalized Vickrey Auction (GVA) along with Algorithm \ref{Algorithm1} as the
pricing mechanism that determines the payments to be given by the
winning bidders. 

The Generalized Vickrey Auction (GVA) has a payoff
structure that is designed in a manner such that each winning agent
gets a discount on its actual bid. This discount is called a Vickrey
Discount, and is defined by~\cite{13} as the extent by which the total
revenue to the seller is increased due to the presence of that winning
bidder, i.e., the marginal contribution of the winning bidder to the
total revenue.  The GVA framework requires significant transfer payments 
from bidders to auctioneer hence a redistribution mechanism is required 
to reduce the cost of implementation~\cite{37,38}. Hence, after we obtain 
winning bidders from the algorithm of Sandholm, the GVA mechanism can be 
applied to get Package Cost (Definition \ref{def3}) and Algorithm \ref{Algorithm1} 
can be used for redistribution of payments back to the bidders to achieve fair 
allocation. We give mathematical formulations to show that both kinds of fairness 
can be achieved in CAS.

\subsection{Notion of Fairness in Combinatorial Auctions}

\begin{enumerate}

\item Each bidder and the auctioneer define its fair values in the
  fairness table (Table \ref{tab:fairtable}) before the start of the
  bidding process.  It is a sealed matrix and is unsealed at the end
  of bidding process.
\item An allocation tree is constructed at the end of the bidding
  process to determine the optimum allocation and the winning
  bidders~\cite{18}.  Information about all the bidders in a tie is
  not discarded using some pre-defined criteria.
\item Calculate the package cost \(\Psi(b_i,S_j)\) (Definition \ref{def3}) 
denoted by $P_{i,j}$ which is the final winning bid amount for bidder $b_i$ on 
package $S_j$ is obtained after applying GVA scheme.
\item The fair value of the package won by each bidder is calculated,
  and the value is denoted as $Q_{i,j}$ for the bidder $b_i$ who wins
  the package $S_j$.
\item The fair value of each package is calculated using the fairness
  table of the auctioneer and is denoted as $Q_{a,j}$ for a package
  $S_j$.
\item The values of $Q_{a,j}$ and $P_{i,j}$ are compared to determine
  the final payment by the bidder which is considered fair.
\end{enumerate}

Now, we propose an algorithm which satisfies extended fairness in all
cases, except in case of a tie.

\subsection{Notations Used in
  Algorithm~\ref{Algorithm1}} \label{notations}

\begin{itemize}
\item $b_i$ is an arbitrary bidder $i$ who belongs to the set of
  bidders $B$.
\item $S_j$ is the winning package with complimentaries and substitutes
  included, which is a subset of set $R$.
\item In general, the Fairness Table for bidder $b_i$ and Auctioneer
  $a$ is defined as shown in Table~\ref{tab:generalfairtable}.

\begin{table}[ht]
\centering
\begin{tabular}{c c c c c c} \toprule
\multicolumn{6}{c}{Fairness Table} \\ \cmidrule(r){1-6}
& Resource $r_0$ & Resource $r_1$ & Resource $r_2$ & \ldots & Resource $r_{m-1}$\\ \midrule
Bidder $b_0$ & \(\Pi(b_0,r_0)\)  & \(\Pi(b_0,r_1)\)  & \(\Pi(b_0,r_2)\)  & \ldots  & \(\Pi(b_0,r_{m-1})\) \\ 
Bidder $b_1$ & \(\Pi(b_1,r_0)\)  & \(\Pi(b_1,r_1)\)  & \(\Pi(b_1,r_2)\)  & \ldots  & \(\Pi(b_1,r_{m-1})\) \\
. & . & . & . & \ldots  & .  \\. & . & . & . & \ldots  & .  \\. & . & . & . & \ldots  & . \\
Bidder $b_{n-1}$ & \(\Pi(b_{n-1},r_0)\)  & \(\Pi(b_{n-1},r_1)\)  & \(\Pi(b_{n-1},r_2)\)  & \ldots  & \(\Pi(b_{n-1},r_{m-1})\) \\
\addlinespace
Auctioneer $a$ & \(\xi(r_0)\)  & \(\xi(r_1)\)  & \(\xi(r_2)\)  & \ldots  & \(\xi(r_{m-1})\) \\ \bottomrule
\end{tabular}
\caption{Fair values for each resource by all bidders and auctioneer}
\label{tab:generalfairtable}
\end{table}

\item The fair value function by a bidder $b_i$ for a resource $r_j$
  is given by \(\Pi (b_i, r_j) = d\), where d $\in \mathbb{N}$.
\item $Q_{i,j}$ is the fair value of resource $r_j$ by bidder $b_i$
  where $r_j\in$ R and $b_i \in$ B.
\item $Q_{a,j}$ is the fair value of resource $r_j$ by auctioneer $a$
  where $r_j\in$ R. Here we consider only a single auctioneer.
\item The package Cost $P_{i,j}$ (Definition \ref{def3}) for bidder
  $b_i$ obtained from the GVA scheme is represented as \(\Psi
  (b_i,S_j)\).(The package cost is a function of bid values on the bundles of resources)
\item The pay function by a bidder $b_i$ is represented as $pay(c)$ is
  the final payment to be made to the auctioneer by the bidder $b_i$
  where $c$ is the bid amount.
\item $\Phi$ (Definition~\ref{def4}) is the net amount above the fair
  value distributed by the auctioneer $a$ to the bidders for a package
  $S_j$.
\item \(distribute\) is a function which calculates the amount to be
  given back to the bidders who bid for the winning package $S_j$
  (Definition~\ref{def4}).
\item $loss$ is the net amount below the fair value given by the
  auctioneer to the package $S_j$.
\end{itemize}

\subsection{Flow of Algorithm~\ref{Algorithm1}} \label{flow}

In Algorithm~\ref{Algorithm1}, we calculate the package cost and the
fair values of bidder and auctioneer given in the lines 1--3.  These
are calculated in the beginning and are represented as $P_{i,j}$,
$Q_{i,j}$ and $Q_{a,j}$ respectively.

In lines 4--9, we have the first {\bf if} condition where the package
cost $P_{i,j}$ is greater than the fair value assigned by the
auctioneer for the package $Q_{a,j}$. If this evaluates to
\textsc{TRUE}, then the bidder pays the amount but the net profit
calculated is distributed among all the bidders who bid for that
package proportional to their bids.
\\\\
In lines 10--12, we have the second {\bf if} condition where the
package cost $P_{i,j}$ is equal to the fair value assigned by the
auctioneer for the package $Q_{a,j}$. If this evaluates to
\textsc{TRUE}, since there is no profit the bidder still pays and
there is no amount distributed to the winning package bidders.
\\\\
In lines 13--32, we have the third `if' condition where the package
cost $P_{i,j}$ is less than the fair value assigned by the auctioneer
for the package $Q_{a,j}$. If this evaluates to \textsc{TRUE}, there
is a loss for the auctioneer so we try to minimize the loss by
checking the additional cases as follows.
\\\\
First at line $15$, if the fair value of bidder $Q_{i,j}$ is greater
than fair value of auctioneer $Q_{a,j}$ evaluates to \textsc{TRUE},
then the bidder will have to pay only $Q_{a,j}$. This prevents loss
for auctioneer and also the bidder deems it as fair.
\\\\
Secondly at line $19$, if the fair value of bidder $Q_{i,j}$ is equal
to the fair value of auctioneer $Q_{a,j}$ evaluates to \textsc{TRUE},
then the bidder will have to pay only $Q_{a,j}$ as in the previous
condition. Similar to the previous condition this prevents loss for
auctioneer and also the bidder deems it as fair.
\\\\
Finally at line $23$, if the fair value of bidder $Q_{i,j}$ is less
than fair value of auctioneer $Q_{a,j}$ evaluates to \textsc{TRUE},
then we have to see the additional two conditions as follows.
\\\\
If the fair value of bidder $Q_{i,j}$ is less than or equal to package
cost of bidder $P_{i,j}$ then the bidders' final payment remains the
same, i.e., $P_{i,j}$.
\\\\
If the fair value of bidder $Q_{i,j}$ is greater than the package cost
of bidder $P_{i,j}$ then the bidders' final payment is $Q_{i,j}$.
These are presented in Algorithm~\ref{Algorithm1}.

\subsection{Algorithm to Incorporate Extended Fairness} \label{algo}
\begin{algorithm}[H]
\SetLine
\linesnumbered
\KwData{package cost, fair value of winning bidder, fair value of auctioneer}
\KwResult{Final Payment by the bidder }
{\(P_{i,j} \leftarrow \Psi(b_i,S_j)\)}\tcc*[f]{where \(P_{i,j} \in \mathbb{R}\)}\\
{\(Q_{i,j} \leftarrow \Pi(b_i,S_j)\)}\tcc*[f]{where \(Q_{i,j} \in \mathbb{R}\)}\\
{\(Q_{a,j} \leftarrow \xi(S_j)\)}\tcc*[f]{where \(Q_{a,j} \in \mathbb{R}\)}
		
\If{\(P_{i,j} > Q_{a,j}\) }{
\(pay(P_{i,j})\)\;
\(\Phi \leftarrow (P_{i,j} - Q_{a,j})\)\;
\(distribute(\Phi\times[(Q_{k,j} - Q_{a,j}) / Q_{a,j}] )\)\
\tcc*[r]{among other bidders who bid for package $S_j$}\
}
\If{\(P_{i,j} = Q_{a,j}\) }{
\(pay(P_{i,j})\)\;
}
\If{\(P_{i,j} < Q_{a,j}\) }{
\(loss \leftarrow (Q_{a,j} - P_{i,j})\)\tcc*[r]{Auctioneer can recover as follows}\
\If{\(Q_{i,j} > Q_{a,j}\)}{
\(pay(Q_{a,j})\)\tcc*[r]{Bidder's estimate of fair value is more than $P_{i,j}$}\
}
\If{\(Q_{i,j} = Q_{a,j}\) }{
\(pay(Q_{a,j})\)\tcc*[r]{fair value is same and $Q_{i,j}$ greater than $P_{i,j}$}\
}
\If{\(Q_{i,j} < Q_{a,j}\) }{
	
\eIf{\(Q_{i,j} <= P_{i,j}\) }{
\(pay(P_{i,j})\)\tcc*[r]{Bidder's final payment remains the same}\
}{
\(pay(Q_{i,j})\)\tcc*[r]{Bidder's final payment is $Q_{i,j}$}\
}
}
}	
\caption{Algorithm incorporating extended fairness}
\label{Algorithm1}
\end{algorithm}
		
\subsection{Handling a Case of a Tie---Incorporating Basic Fairness}

Unlike traditional algorithms, we do not discard the bids in the case
of a tie on the basis of some pre-decided criterion.  We consider
these cases in our algorithm to provide basic fairness to the bidders.
In case of a tie, we shall measure the utility value of the resource
to each bidder in the tie.  The utility value of a resource to a
bidder is the quantified measure of satisfaction or happiness derived
by the procurement of the resource (Definition~\ref{def3}).

The bidders maximize this utility value to quantify the importance and
their need for the resource.  Thus, the higher the utility value, the
greater is the need for the package.  In such a case, fairness can be
achieved if the package $S$ is divided among all the bidders in a
proportional manner, i.e., in accordance to the utility value attached
to the package by each bidder.

\begin{example} \label{ex5}

  Let us consider the same example to explain the concept of basic
  fairness in our system.  From Table~\ref{tab:bidtable}, we observe
  that the optimum allocation attained through allocation tree
  comprises the package {$r_0$, $r_1$ , $r_2$ } as it generates the
  maximum revenue of \$$50$.  However, we see that this bid is
  submitted by the two bidders, $b_0$ and $b_1$.

  Thus, we calculate the fair value of the package $S_j$ = \{$r_0$,
  $r_1$ , $r_2$\} for the bidder $b_0$ and $b_1$, i.e., \(\Pi(b_0,
  S_j) = 5+8+8 = \$21\) and \(\Pi(b_1, S_j) = 10+2+8 = \$20\).  Thus,
  the utility value of the package $S$ for the bidder $b_0$ and $b_1$
  is as follows:\textsc{}

  For bidder $b_0$ , \(\Gamma(b_0, S_j) = 50 - 21 = \$29\), and

  For bidder $b_1$ , \(\Gamma(b_1, S_j) = 50 - 20 = \$30\).

  Hence, the package $S$ is divided among bidders $b_0$ and $b_1$, in
  the ratio of $29:30$.  In other words, bidder $b_0$ gets $49.15$\%
  and bidder $b_1$ gets $50.85$\% of the package $S_j$.

  The payment made by the bidders is also done in the similar
  proportional manner similar to Example~\ref{ex1}.

  The bidders $b_0$ and $b_1$ make their respective payments in the
  ratio of $29:30$ to make up a total of \$$50$ for the auctioneer,
  i.e., bidder $b_0$ pays \$$24.65$ and bidder $b_1$ pays \$$25.35$ to
  the auctioneer for their respective shares.

\end{example}

Hence, we see that extended fairness as well as basic fairness are
achieved in a CAS using our approach.  We take into account the fair
estimates of the auctioneer and the bidders for each resource to
ensure that fairness is achieved to auctioneer as well as the bidders.
A detailed analysis of our mechanism is in the following section.

\section{Analysis} \label{analysis}

Using the solution concept of dominant strategies and mechanism design
with quasilinear preferences, we can analyze the following.

We say that the agents' preferences are quasilinear when they satisfy
the conditions given below: first we are in a setting where the
mechanism can choose to charge or reward an agent an arbitrary amount.
Second, and more restrictive, is that an agent's utility of a choice
cannot depend on the money that he has, i.e., his value is the same
whether he is rich or poor.  Finally, the agents care only about the
choice selected and their own payments, i.e., they are not concerned
about monetary payments made or received by other agents.

\subsection{Fairness} \label{fairness} 

We say that extended fairness is achieved when a bidder procures a
resource for an amount that is equal to his estimate of fair value of
that resource. In such a case, the bidder believes that the resource
was procured by it at a fair amount irrespective of other bidders
estimate of fair value of that resource. This is according to the last
condition of quasilinear preference. Thus, the allocation is believed
to be extendedly fair as per the estimates of the winning bidder.

We also see that basic fairness is achieved in our system when there
is more than one bidder who has raised equal bid for the same set of
resources.  In such a case, we divide the set of resources among all
the bidders so as to ensure fairness to all the bidders in a
tie. However, this division of resources set is done in a proportional
manner. We intend to divide the resource such that the bidder holding
highest utility value to it should get the biggest share.  To ensure
this, we calculate the utility value (i.e., \(\Gamma(b_i, S_j) =
\Upsilon(b_i,S_j) - \Pi(b_i,S_j))\) of the set of resources to each
bidder and divide the set in the ratio of these values among the
respective bidders. Thus, we see that each bidder procures his basic
share of the set of resources in accordance to the basic importance
attached by the bidder to the set of resources. Due to the achievement
of fairness through our payment scheme, the bidders are expected to
show willingness to participate in the auctions.

\subsection{Higher Rewards} \label{rewards} 

Here we show that a bidder is encouraged to bid higher as he gets
rewards proportional to his bids which are fair.

\begin{theorem} \label{Thm1}

  Given a CAS, a bidder has an incentive to bid higher using the
  extended fairness algorithm \ref{Algorithm1} as he gets higher
  rewards which are fair provided:

\begin{enumerate}
\item the bid value $P_{i,j}$ is always greater than or equal to fair
  value $Q_{a,j}$ on package $S_j$.
\item $Q_{i,j}$ of winning bidder is always greater than or equal to
  $Q_{i,j}$ of the losing bidder.
\end{enumerate}

\end{theorem}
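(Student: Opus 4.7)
The plan is to trace which branch of Algorithm~\ref{Algorithm1} is executed under the two hypotheses, and then show that the winner's reward is monotonically increasing in the bid $P_{i,j}$.

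First I would observe that hypothesis (1) forces the algorithm to take either the first branch ($P_{i,j}>Q_{a,j}$, lines 4--9) or the second branch ($P_{i,j}=Q_{a,j}$, lines 10--12); the lower block (lines 13--32) is never entered. The equality case is degenerate --- $\Phi=0$ and nothing is distributed, so there is no reward to compare. I would therefore focus on the strict-inequality case, in which $\Phi=P_{i,j}-Q_{a,j}>0$ and each losing bidder $b_k$ receives $\Phi\cdot(Q_{k,j}-Q_{a,j})/Q_{a,j}$ by the \textsc{distribute} call.

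Next, using the reward formula of Definition~\ref{def4}, I would compute
\[
 reward \;=\; \Phi \;-\; \sum_{k\neq i}\Phi\cdot\frac{Q_{k,j}-Q_{a,j}}{Q_{a,j}} \;=\; \Phi\left(1-\sum_{k\neq i}\frac{Q_{k,j}-Q_{a,j}}{Q_{a,j}}\right).
\]
The bracketed coefficient depends only on entries of the sealed fairness table and is independent of $P_{i,j}$. Hence a bid increase of $\delta>0$ raises $\Phi$ by exactly $\delta$ and scales the reward by the same positive constant, establishing the incentive-to-bid-higher conclusion. To argue that the resulting rebate is \emph{fair}, I would invoke hypothesis (2): since $Q_{i,j}\geq Q_{k,j}$ for every losing $k$, the winner's notional ``share'' $(Q_{i,j}-Q_{a,j})/Q_{a,j}$ dominates each loser's share, so the winner retains at least as much of the profit as any individual loser receives, which aligns with the proportional-to-fair-value principle of Definition~\ref{def2}.

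The main obstacle is controlling the sign of the bracketed coefficient: if $\sum_{k\neq i}(Q_{k,j}-Q_{a,j})/Q_{a,j}\geq 1$, the winner's reward could vanish or become negative, which would invert the claimed incentive. I would handle this by combining hypothesis (2) with truthful reporting of fair values --- since the winner's $Q_{i,j}$ dominates every loser's $Q_{k,j}$ relative to $Q_{a,j}$, the aggregate rebate cannot exhaust $\Phi$, keeping the coefficient strictly positive. This positivity argument is the delicate part of the proof; once it is in place, the linearity of $reward$ in $\Phi$ and the ordering $Q_{i,j}\geq Q_{k,j}$ handle monotonicity and fairness respectively by essentially direct computation.
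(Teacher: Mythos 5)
Your overall strategy is the same as the paper's: restrict attention to the first branch of Algorithm~\ref{Algorithm1}, observe that the distribution coefficients $(Q_{k,j}-Q_{a,j})/Q_{a,j}$ are fixed by the sealed fairness table and independent of the bid, and conclude that the rewards scale linearly with $\Phi = P_{i,j}-Q_{a,j}$. The paper does exactly this, comparing two profit levels $\Phi_x$ and $\Phi_y$ against constant coefficients $k$ and $l$, and it uses hypothesis (2) in the same way you do, to order the winner's share above the loser's and call the split ``fair.'' Your write-up is the more explicit of the two in that it actually forms the winner's net reward by summing over all losers.

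That explicitness, however, exposes a genuine gap in your positivity step. You correctly flag that the coefficient $1-\sum_{k\neq i}(Q_{k,j}-Q_{a,j})/Q_{a,j}$ could be non-positive, which would invert the claimed incentive, but the fix you propose does not work: hypothesis (2) is a pointwise condition ($Q_{i,j}\geq Q_{k,j}$ for each loser $k$) and places no bound on the number of losers, so the aggregate $\sum_{k\neq i}(Q_{k,j}-Q_{a,j})/Q_{a,j}$ can exceed $1$ even though every individual term is dominated by the winner's share --- for instance, ten losers each with $(Q_{k,j}-Q_{a,j})/Q_{a,j}=0.5$ against a winner whose share is $0.9$. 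Truthful reporting of fair values does not repair this. To close the gap you would need an extra hypothesis (a bound on the number of losers, or on their aggregate premium over $Q_{a,j}$), or you would have to retreat to the paper's weaker reading, in which the quantity shown to increase with the bid is each bidder's distributed share $\Phi\cdot(Q_{k,j}-Q_{a,j})/Q_{a,j}$ rather than the winner's net reward; the paper only ever compares per-bidder shares between two bidders and therefore never confronts the aggregation issue you correctly identified as the delicate point.
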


\begin{proof}

  Assume any two bidders \(b_x, b_y \in B\) who bid for package
  $S_j$.  
  Assume values: \(P_{x,j},Q_{x,j}\) for bidder $b_x$.
  \(P_{y,j},Q_{y,j}\) for bidder $b_y$.  \(Q_{a,j}\) for auctioneer
  $a$.

  The first condition is \(P_{x,j} > Q_{a,j}\) and \(Q_{x,j} >
  Q_{y,j}\) hence we have a profit $\Phi_x$ which is distributed in
  the ratio \(\Phi_x\times \left(\frac{Q_{x,j} -
      Q_{a,j}}{Q_{a,j}}\right)\) to $b_x$ and
  \(\Phi_x\times\left(\frac{Q_{y,j} - Q_{a,j}}{Q_{a,j}}\right)\) to
  $b_y$. This gives proportional as well as fair incentives to $b_x$
  and $b_y$. This also gives a higher reward to $b_x$ since \(Q_{x,j}
  > Q_{y,j}\).

  For notational convinience, let $k$ represent \(\frac{Q_{x,j} - Q_{a,j}}{Q_{a,j}}\) and $l$ represent \(\frac{Q_{y,j} -
    Q_{a,j}}{Q_{a,j}}\).

  The second condition is \(P_{x,j} > Q{a,j}\) and \(Q_{x,j} >
  Q_{y,j}\) hence we have a profit $\Phi_y$ which is distributed as
  $\Phi_y \times k$ and $\Phi_y \times l$ where $k, l$ are constant
  for $b_x$ and $b_y$. When \(\Phi_y > \Phi_x\) we see a greater
  amount of reward is given to higher bidding and hence bidders are
  encouraged to bid more. \qedhere

\end{proof}

\begin{theorem} \label{Thm2}

  Given a CAS, a bidder gets higher rewards using the extended
  fairness algorithm \ref{Algorithm1} if his fair value $Q_{i,j}$
  satisfies the condition \(Q_{a,j}> Q_{i,j} > 2\times Q_{a,j}\) for a
  package $S_j$.

\end{theorem}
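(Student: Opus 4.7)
The plan is to work directly from the reward formula embedded in Algorithm~\ref{Algorithm1} and Definition~\ref{def2}. When the winning package cost $P_{i,j}$ strictly exceeds the auctioneer's fair value $Q_{a,j}$, the first branch of Algorithm~\ref{Algorithm1} fires, generating profit $\Phi = P_{i,j} - Q_{a,j}$, and each (losing) bidder $b_i$ who bid on the winning package $S_j$ receives
\[
R_i \;=\; \Phi \cdot \frac{Q_{i,j} - Q_{a,j}}{Q_{a,j}}.
\]
The proof will reduce to an algebraic comparison of $R_i$ against the natural baseline $\Phi$, with $\Phi$ and $Q_{a,j}$ treated as fixed parameters.

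First I would establish monotonicity: viewed as a function of $Q_{i,j}$, the reward $R_i$ is affine with positive slope $\Phi/Q_{a,j}$, so any increase in the bidder's declared fair value strictly increases the reward received. Second, I would identify the critical threshold at which the fraction $(Q_{i,j}-Q_{a,j})/Q_{a,j}$ equals $1$; this happens precisely at $Q_{i,j} = 2\,Q_{a,j}$. Solving $R_i > \Phi$ for $Q_{i,j}$ yields the single inequality $Q_{i,j} > 2\, Q_{a,j}$, so in this regime the bidder receives back strictly more than the entire profit. Combined with Theorem~\ref{Thm1}, which already provides the monotonicity of the reward in the bid, this yields the claimed ``higher reward'' property.

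The main obstacle is interpreting the condition as literally written. The chain $Q_{a,j} > Q_{i,j} > 2\, Q_{a,j}$ is self-contradictory for any positive $Q_{a,j}$, so the statement contains a typographical error. The natural and consistent reading, dictated by the reward formula itself, is that the effective hypothesis is the second inequality $Q_{i,j} > 2\,Q_{a,j}$ (which automatically implies $Q_{i,j} > Q_{a,j}$, making the leftmost clause redundant or misprinted). Under that reading, the substitution into $R_i$ is immediate and the conclusion follows in one line; the rest of the argument is simply verifying that the relevant branch of Algorithm~\ref{Algorithm1} is the one that gets executed, which is guaranteed by the implicit assumption $P_{i,j} > Q_{a,j}$ (otherwise $\Phi \le 0$ and no reward is distributed at all, trivializing the claim).
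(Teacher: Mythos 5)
Your resolution of the typographical condition goes in the opposite direction from the paper's, and this changes the content of the claim. The paper's own proof makes clear that the intended hypothesis is $Q_{a,j} < Q_{i,j} < 2\,Q_{a,j}$, i.e., the ratio $k = (Q_{i,j}-Q_{a,j})/Q_{a,j}$ is a \emph{proper fraction} in $(0,1)$. Its argument is a three-way case analysis on the \emph{winning} bidder's declared fair value: if $Q_{i,j} < Q_{a,j}$ then $k<0$ and the bidder is penalized by having to distribute at the doubled proportion $2k$; if $Q_{i,j} > 2\,Q_{a,j}$ then $k>1$, the amount to be distributed exceeds $\Phi$, and the bidder ``ends up paying higher amount without any rewards''; only when $0<k<1$ is $reward = \Phi - distribute(\Phi k)$ positive, whence the maximum reward occurs in that middle regime. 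Your proposed reading --- that the effective hypothesis is $Q_{i,j} > 2\,Q_{a,j}$ --- is exactly the regime the paper identifies as the \emph{worst} one, so your one-line substitution establishes (a version of) the opposite of what is intended; it would also incentivize inflating declared fair values without bound, contradicting the truthfulness claim that Theorem~\ref{truththm} later hangs on this theorem.

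There is a second, related gap: you analyze the wrong quantity. You compute $R_i = \Phi\,(Q_{i,j}-Q_{a,j})/Q_{a,j}$ as the amount \emph{received by a losing bidder} and compare it to $\Phi$. But the theorem (via Definition~\ref{def4}) concerns the winning bidder's net $reward = \Phi - x$, where $x$ is what is distributed away, and in the paper's proof the distribution fraction is driven by the winning bidder's own $Q_{i,j}$; so once $k$ exceeds $1$, a larger $Q_{i,j}$ yields \emph{less} net reward, not more. Monotonicity of the losing bidders' receipts in their fair values is essentially the content of Theorem~\ref{Thm1}, not of this theorem. To repair your argument you would need to (i) correct the inequality chain to $Q_{a,j} < Q_{i,j} < 2\,Q_{a,j}$, and (ii) compare $reward = \Phi - distribute(\Phi k)$ across the three regimes $k<0$, $0<k<1$, and $k>1$, as the paper does.
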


\begin{proof}

  Assume a bidder \(b_i \in B\) who bids for package $S_j$ and wins
  it.  Assume values: \(P_{i,j},Q_{i,j}\) for bidder $b_i$.
  \(Q_{a,j}\) for auctioneer $a$.
      
  The first condition is when the winning bidder $b_i$ has given a a
  low fair value for a package $S_j$ intentionally, i.e., \(Q_{i,j} <
  Q_{a,j}\).  Now, his ratio is calculated as \(k=\frac{Q_{i,j} -
    Q_{a,j}}{Q_{a,j}}\) which is negative.  He has to distribute an
  extra amount of the same proportion, i.e.,
  \(distribute\left(\Phi\times\left(2 \times \frac{Q_{i,j} -
        Q_{a,j}}{Q_{a,j}}\right)\right)\).  Hence, \(reward=\Phi -
  distribute \left(\Phi\times \left(2 \times \frac{Q_{i,j} -
        Q_{a,j}}{Q_{a,j}}\right)\right)\). Therefore, bidder has to
  pay $P_{i,j} - reward$.
  
  The second condition is when the winning bidder $b_i$ has given a
  very high fair value for a package $S_j$, i.e., \(Q_{i,j} > 2 \times
  Q_{a,j}\). Now, his ratio is calculated as \(k=\frac{Q_{i,j} -
    Q_{a,j}}{Q_{a,j}}\) which is greater than 1. He has to distribute
  \(\Phi\times\left(\frac{Q_{i,j} - Q_{a,j}}{Q_{a,j}}\right)\) which
  is greater than $\Phi$. Therefore, the extra amount to be
  distributed would be added to $P_{i,j}$ and hence ends up paying
  higher amount without any rewards.
  
  The third condition is when the winning bidder $b_i$ has given a
  true fair value for a package $S_j$, i.e., \(Q_{a,j}>Q_{i,j} > 2
  \times Q_{a,j}\). Now, his ratio is calculated as \(k=\frac{Q_{i,j}
    - Q_{a,j}}{Q_{a,j}}\) which is a proper fraction. Hence,
  \(reward=\Phi - distribute \left(\Phi\times \left(\frac{Q_{i,j} -
        Q_{a,j}}{Q_{a,j}}\right)\right)\).  Therefore, bidder $b_i$
  has to pay $P_{i,j} - reward$ from definition \ref{def4}.
   
  Clearly, we can see that the maximum reward is possible only in the
  third condition, where the fair value is neither too low nor very
  high.  Thus, Algorithm~\ref{Algorithm1} provides higher rewards if
  fair value for a package $S_j$ satisfies the condition \(Q_{a,j}>
  Q_{i,j} > 2\times Q_{a,j}\). \qedhere
  
\end{proof}

\subsection{Other Issues} \label{otherissues}

Now let us discuss some issues considering the quasilinear mechanism.

\subsubsection{Truthfulness} \label{truth}

Consider the following definition by~\cite{22}.

\begin{definition} \label{truthdef} 

  A quasilinear mechanism is \emph{truthful} if it is direct and
  \(\forall i\), bidder $b_i$'s equilibrium strategy is
  \(\Upsilon(b_i,S_j) = \Pi(b_i,S_j).\)

\end{definition}

(Note that~\cite{22} uses $v_i$ for what we denote by $\Pi(b_i,S_j)$
and $\hat{v_i}$ for what we denote by $\Upsilon(b_i,S_j)$.)

\begin{theorem} \label{truththm}
In our mechanism, the bidder $b_i$'s equilibrium strategy is
  \(\Upsilon(b_i,S_j) = \Pi(b_i,S_j)\) so it is truthful.
\end{theorem}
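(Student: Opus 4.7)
The plan is to leverage two pieces already in place: (i) the underlying pricing mechanism is the Generalized Vickrey Auction (GVA), which is a VCG-style mechanism known to be truthful under quasilinear preferences in dominant strategies; and (ii) Theorems~\ref{Thm1} and~\ref{Thm2}, which together establish that a bidder maximizes his reward precisely when he reports his fair value truthfully in the fairness table. Combining these gives that neither the bid submitted during the auction nor the fair value entered in the fairness table can profitably deviate from the bidder's true valuation $\Pi(b_i,S_j)$, and hence the equilibrium strategy is $\Upsilon(b_i,S_j)=\Pi(b_i,S_j)$.

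First I would recall, citing the quoted definition of truthfulness from~\cite{22}, that a quasilinear mechanism is truthful when reporting one's true value is a dominant strategy. I would then observe that in the mechanism of Section~\ref{fairCAS}, winner determination proceeds via Sandholm's algorithm followed by GVA pricing (yielding the package cost $P_{i,j}=\Psi(b_i,S_j)$), and that GVA is incentive compatible: under GVA alone, bidding $\Upsilon(b_i,S_j)=\Pi(b_i,S_j)$ is a dominant strategy because the payment charged to a winner is independent of his own bid and depends only on the marginal effect of his presence on the total revenue.

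Next I would show that Algorithm~\ref{Algorithm1} preserves this incentive. The key observation is that the redistribution amount and the recomputed payment depend only on the package cost $P_{i,j}$ and on the fair values $Q_{i,j},Q_{a,j}$ declared in the (sealed) fairness table; they do not depend on the bid $\Upsilon(b_i,S_j)$ except through its role in determining $P_{i,j}$ via GVA. Hence, holding the reported fair values fixed, any deviation of $\Upsilon(b_i,S_j)$ from $\Pi(b_i,S_j)$ can only weakly decrease the bidder's utility (by the VCG argument) and cannot increase his reward through the redistribution branch. Simultaneously, Theorems~\ref{Thm1} and~\ref{Thm2} imply that the fair-value declaration $Q_{i,j}$ is maximized in expected reward exactly when it coincides with the bidder's true fair value $\Pi(b_i,S_j)$, so the two reports collapse to the same quantity at equilibrium.

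Putting the two parts together gives $\Upsilon(b_i,S_j)=\Pi(b_i,S_j)$ as the unique dominant-strategy equilibrium, which is the claimed truthfulness. The main obstacle I anticipate is the bookkeeping in the second step: one must check that the non-profit branches of Algorithm~\ref{Algorithm1} (the $P_{i,j}\le Q_{a,j}$ cases, where the payment is overwritten by $Q_{a,j}$ or $Q_{i,j}$) do not open up a profitable misreport of $\Upsilon(b_i,S_j)$ that would change which branch the mechanism enters. A careful case split showing that each such branch leaves the winner weakly worse off than under $\Upsilon(b_i,S_j)=\Pi(b_i,S_j)$ is the technical heart of the argument; everything else follows from the standard VCG truthfulness of GVA together with Theorems~\ref{Thm1} and~\ref{Thm2}.
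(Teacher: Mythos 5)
Your route differs substantially from the paper's. The paper's own proof is a short argument by contradiction: it declares the truthful strategy $d_i$ (with $\Upsilon(b_i,S_j)=\Pi(b_i,S_j)$) to be dominant ``as per our assumption'' (pointing to line 10 of Algorithm~\ref{Algorithm1} and Theorem~\ref{Thm2}), and then observes that a profitable misreport $\Pi(b_i,S_j)'$ would mean $b_i$ has chosen a non-equilibrium strategy, contradicting that assumption. You instead try to establish dominance from first principles by combining VCG-style truthfulness of the GVA with the reward analysis of Theorems~\ref{Thm1} and~\ref{Thm2}. That is in principle the more substantive route, but as written it has a genuine gap, and you have located it yourself: the claim that Algorithm~\ref{Algorithm1} preserves the GVA's incentives is precisely the step you defer to ``a careful case split,'' and it is not routine. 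Redistribution payments that depend on the agents' own reports generically destroy VCG truthfulness (this is exactly why the paper cites the strategyproof-redistribution literature for GVA payments), and in the $P_{i,j}<Q_{a,j}$ branches the winner's payment is overwritten by $Q_{a,j}$ or by his own declared $Q_{i,j}$, so the payment is no longer independent of his own reports. Until that branch-by-branch utility comparison is actually carried out, dominance is asserted rather than proved.

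A second, smaller gap: you read Theorems~\ref{Thm1} and~\ref{Thm2} as implying that the reward is maximized exactly when $Q_{i,j}$ coincides with the true fair value $\Pi(b_i,S_j)$. Theorem~\ref{Thm2} only shows that the reward is larger when $Q_{i,j}$ lies in a range around $Q_{a,j}$ (neither too low nor more than twice $Q_{a,j}$) than in the two extreme cases; it does not single out the true value within that range, and it says nothing directly about the bid $\Upsilon(b_i,S_j)$. The ``collapse'' of the two reports to the same quantity at equilibrium is therefore not justified by those theorems. To be fair, the paper's own proof is essentially circular on the same point (it assumes the dominance it is meant to establish), so your plan points in the right direction --- but the theorem would only be proved once the case analysis you describe is actually executed.
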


\begin{proof}
Assume a bidder \(b_i \in B\) who bids for package $S_j$ provides his
fair value $\Pi(b_i,S_j)$ in the fairness table. Let us denote the
strategy choosen by $b_i$. i.e., $\Upsilon(b_i,S_j) = \Pi(b_i,S_j)$ to
be $d_i$ which is a dominant strategy as per our assumption (Line 10 in Algorithm \ref{Algorithm1} and Theorem \ref{Thm2}).

Assume that the bidder $b_i$ would be better off declaring a fair
value $\Pi(b_i,S_j)^\prime$ instead of $\Pi(b_i,S_j)$ to our
mechanism. This implies that $b_i$ has chosen a different strategy
$d_i^\prime$ instead of $d_i$ which is not in equilibrium, contradicting our assumption that $d_i$
is the dominant strategy for $b_i$.\qedhere
\end{proof}

This means that here the only action available to an agent is to
reveal his private information.  Any solution to a mechanism design
problem can be converted into one in which agents always reveal their
true preferences, if the new mechanism ``lies for the agents'' in just
the way they would have chosen to lie to the original mechanism.  Thus
the new mechanism is dominant-strategy truthful~\cite{22}.

In our algorithm the bidder or auctioneer benefit only when they give
their fair value truthfully as in cases where \(P_{i,j} > Q_{a,j}\)
and \(P_{i,j} = Q_{a,j}\), where he gets the incentives as profits are
distributed. But if the fair value is not truthful then he risks going
to the case \(P_{i,j} < Q_{a,j}\) and \(Q_{i,j} < Q_{a,j}\) where
naturally he is denied of any benefits.  Thus if he lies to the
mechanism to gain profits he would not succeed as he would have chosen
a strategy which leads to loss.

\subsubsection{Efficiency} \label{efficiency}

Consider the efficiency with respect to the package won by the bidder
$b_i$ denoted by $S_j$.  We define $S_j^\prime$ as a subset of
resources which are not won by the bidder $b_i$.

Consider the following definition by~\cite{22}.

\begin{definition} \label{effdef}

  A quasilinear mechanism is \emph{strictly Pareto efficient}, or just
  \emph{efficient}, if in equilibrium it selects a choice $S_j$ such
  that \(\sum_i \Pi(S_j) \geq \sum_i \Pi(S_j^\prime)\)

\end{definition}

(Note that~\cite{22} uses $x$ for what we denote by $S_j$.)

\eat{
In the above definition, the choice is represented by $x$. In our
case, the choice is represented by the selection of a package
$S_j$. Hence update the notation for quasilinear utility setting from
$x$ to $S_j$.
}
\begin{theorem} \label{effthm}
In our mechanism, the bidder $b_i$'s equilibrium strategy is to select choice $S_j$ such
  that \(\sum_i \Pi(S_j) \geq \sum_i \Pi(S_j^\prime)\) so it is efficient. 
\end{theorem}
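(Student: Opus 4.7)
The plan is to derive efficiency from two pieces already in place: the truthfulness statement (Theorem \ref{truththm}) and the optimality of Sandholm's winner-determination procedure invoked in Section \ref{fairCAS}. By Theorem \ref{truththm}, in equilibrium every bidder reports $\Upsilon(b_i,S_j) = \Pi(b_i,S_j)$, so the bid values on which the allocation stage operates coincide with the true fair values. Sandholm's algorithm then returns the feasible package assignment that maximizes $\sum_i \Upsilon(b_i,\cdot)$, which under truthful reporting is identical to $\sum_i \Pi(b_i,\cdot)$.

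Next I would close the argument by contradiction. Suppose the mechanism selected $S_j$ in equilibrium yet there existed a feasible alternative $S_j^\prime$ with $\sum_i \Pi(b_i,S_j^\prime) > \sum_i \Pi(b_i,S_j)$. By the previous step, this would be equivalent to $\sum_i \Upsilon(b_i,S_j^\prime) > \sum_i \Upsilon(b_i,S_j)$, so Sandholm's optimality guarantee would force the allocation stage to return $S_j^\prime$ rather than $S_j$, contradicting the hypothesis that $S_j$ was the mechanism's equilibrium output. Hence the inequality in Definition \ref{effdef} holds, matching the format of Theorem \ref{truththm}'s proof.

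The main obstacle I anticipate is making explicit that the ``choice'' referenced in Definition \ref{effdef} is the allocation $S_j$ itself, not the subsequent pricing or redistribution step prescribed by Algorithm \ref{Algorithm1}. This reduction is legitimate under the quasilinear setting invoked at the start of Section \ref{analysis}, where monetary transfers are welfare-neutral in the utilitarian sum and agents care only about the choice selected and their own payments. I would therefore spell out this reduction first, so that the later appeal to Sandholm's optimality operates on the right object. A secondary concern is notational: Definition \ref{effdef} writes $\sum_i \Pi(S_j)$ while the paper's typed definition of $\Pi$ requires a bidder argument, so I would silently read the sum as $\sum_i \Pi(b_i,S_j)$ before invoking it.
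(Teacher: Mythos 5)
Your proposal is sound within the paper's framework, but it takes a genuinely different route from the paper's own proof. The paper argues entirely at the level of the bidder's strategy: it supposes the dominant strategy $d_i$ ``selects'' a choice $S_j$ with $\sum_{b_i}\Pi(b_i,S_j) < \sum_{b_i}\Pi(b_i,S_j^\prime)$, asserts (citing Theorem~\ref{Thm1}, the higher-rewards result) that some other strategy $d_i^\prime$ would select the more efficient $S_j^\prime$, and concludes that $d_i$ was not dominant --- a contradiction. You instead argue at the level of the mechanism: truthfulness (Theorem~\ref{truththm}) forces reported values to coincide with fair values in equilibrium, and Sandholm's winner-determination guarantee then makes the selected allocation maximize $\sum_i \Pi(b_i,\cdot)$ directly, with the contradiction landing on the allocation stage rather than on the bidder's strategy choice. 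Your decomposition is the standard VCG-style efficiency argument and is arguably tighter: the paper's version leaves unexplained why a single bidder's deviation could change the mechanism's chosen allocation, and its citation of Theorem~\ref{Thm1} (which concerns reward magnitudes, not allocative choice) does little work, whereas your appeal to Sandholm's optimality is explicitly supported by Section~\ref{optimal}. What the paper's route buys is that it does not need to identify the bid table with the fairness table; your route leans on $\Upsilon(b_i,S)=\Pi(b_i,S)$, which sits uneasily with the paper's earlier remark that fair values are additive while bid values are combinatorial --- an inconsistency inherited from Theorem~\ref{truththm} itself rather than introduced by you, but worth flagging if you pursue this line. Your two anticipated reductions (reading the choice as the allocation $S_j$, and reading $\sum_i\Pi(S_j)$ as $\sum_i\Pi(b_i,S_j)$) are both consistent with how the paper uses its own notation.
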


\begin{proof}
Assume a bidder \(b_i \in B\) has chosen a dominant strategy $d_i$
selects a choice $S_j$ such that $\sum_{b_i} \Pi(b_i,S_j) < \sum_{b_i}
\Pi(b_i,S_j^\prime)$. This implies that sum of fair values of items
in selected package $S_j$, is not more efficient than the sum of items
not in package $S_j$. Thus there is another strategy $d_i^\prime$
which selects a choice $S_j^\prime$ which is more efficient than
$S_j$(Theorem \ref{Thm1}). Hence, $d_i$ was not in equilibrium as $d_i^\prime$ is the dominant strategy. This is a
contradiction to our assumption that $d_i$ was the dominant
strategy.\qedhere
\end{proof}

An efficient mechanism selects the choice which maximizes the sum of
the agents' utilities, disregarding the the monetary payments they are
required to pay.  This can be shown in our algorithm concept where the
choice is made on the agents' fair values which helps in maximizing
its profits.  Thus, the efficiency is defined in terms of the true
fair values and not the declared value in the bid table
(Table~\ref{tab:bidtable}).

\subsubsection{Incentive Compatibility} \label{incentive}

The combinatorial auction can be made incentive compatible using the
Generalized Vickrey Auction (GVA) and Algorithm \ref{Algorithm1}.  The
payment using GVA can be explained by assuming that all agents follow
their dominant strategies and declare their values truthfully. Each
agent is made to pay his social cost; the aggregate impact that his
participation has on other agents utilities~\cite{22}.

The payment mechanism described in our system is incentive compatible,
i.e., they fare best when they reveal their private information
truthfully in certain cases. As shown in Theorem \ref{Thm1} and
Theorem \ref{Thm2} the bidders following dominant strategies in
Algorithm \ref{Algorithm1} is bound to get higher incentives.

Thus the GVA and Algorithm \ref{Algorithm1} enables our mechanism to
be incentive compatible.

\subsubsection{Optimality} \label{optimal}

Optimality is a significant property that is desired in a CAS.  We
ensure this property by the use of an algorithm of Sandholm~\cite{18}
in our system.  It is used to obtain the optimum allocation of
resources so as to maximize the revenue generated for the auctioneer.
Thus, the output obtained is the most optimal output and there is no
other allocation that generates more revenues than the current
allocation.

\section{Conclusion} \label{conclusion}

We have shown that fairness can be incorporated in CAS from our
methodology.  Extended fairness as well as basic fairness can be
attained through our payment mechanism.  Optimal allocation is
obtained through an algorithm of Sandholm, and the other significant
properties like allocative efficiency and incentive compatibility are
also achieved.  This is an improvement because in the existing world
of multi-agent systems, there do not seem to be many studies that
attempt to incorporate optimality as well as fairness.  The present
paper addresses this lack in a specific multi-agent system, namely,
the CAS.

The Nigerian Communications Commission (NCC) faced problems in giving
incentives to bidders who divulge their preferences and bidders were
not keen on divulging it since it may lead to more adverse
competition.  Our algorithm for extended fairness takes care of this
problem as bidders receive more incentives with higher bids.  Since
the preferences given by the bidders in the fairness table is
confidential and sealed, they need not worry about their preferences
being disclosed to competitors.

The framework described can also be extended in several ways: first is
to de-centralize the suggested algorithm, to avoid use of a single
dedicated auctioneer. Especially in distributed computing
environments, it would be best for there to be a method to implement
the suggested algorithm (or something close to it) without requiring
an agent to act as a dedicated auctioneer~\cite{30}.

A second important extension would be to find applications for the
work.  Some applications that suggest themselves include distribution
of land (a matter of great concern for governments and people the
world over) in a fair manner.  In land auctions where a tie occurs, no
pre-defined or idiosyncratic method need be used to break the tie;
rather, the allocation can be done fairly in the manner suggested.

A third important extension is to experiment with the grid computing 
framework~\shortcite{35}. The applicability of fairness scheme in grid 
computing while allocating resources and its impact on the expected 
revenue would be an interesting application area.

Fairness is also an important and pressing concern in the computing
sciences and information technology, particularly, in distributed
computing~\cite{12}. It is therefore also of interest to see how our
method for achieving fairness could be applied in such contexts.



\begin{thebibliography}{}

\bibitem[\protect\BCAY{Bapna, Das, Garfinkel,\ \BBA\ Stallaert}{Bapna
  et~al.}{2008}]{35}
Bapna, R., Das, S., Garfinkel, R., \BBA\ Stallaert, J. \BBOP2008\BBCP.
\newblock \BBOQ A market design for grid computing\BBCQ\
\newblock {\Bem INFORMS JOURNAL ON COMPUTING}, {\Bem 20}, 100--111.

\bibitem[\protect\BCAY{Barbanel\ \BBA\ Brams}{Barbanel\ \BBA\ Brams}{2004}]{1}
Barbanel, J.~B.\BBACOMMA\  \BBA\ Brams, S.~J. \BBOP2004\BBCP.
\newblock \BBOQ Cake-division with minimal cuts: Envy-free procedures for three
  persons, four persons, and beyond\BBCQ\
\newblock {\Bem Mathematical Social Sciences}, {\Bem 48}, 251--270.

\bibitem[\protect\BCAY{Biggart}{Biggart}{2002}]{24}
Biggart, N.~W. \BBOP2002\BBCP.
\newblock {\Bem Readings in economic sociology}.
\newblock Wiley-Blackwell.

\bibitem[\protect\BCAY{Brams}{Brams}{2005}]{2}
Brams, S.~J. \BBOP2005\BBCP.
\newblock {\Bem Fair Division}.
\newblock Oxford University Press.

\bibitem[\protect\BCAY{Brams\ \BBA\ Taylor}{Brams\ \BBA\ Taylor}{1996}]{3}
Brams, S.~J.\BBACOMMA\  \BBA\ Taylor, A.~D. \BBOP1996\BBCP.
\newblock {\Bem Fair Division}.
\newblock Cambridge University Press.

\bibitem[\protect\BCAY{Bredin, Maheswaran, Imer, Basar, Kotz,\ \BBA\
  Rus}{Bredin et~al.}{2000}]{4}
Bredin, J., Maheswaran, R.~T., Imer, C., Basar, T., Kotz, D., \BBA\ Rus, D.
  \BBOP2000\BBCP.
\newblock \BBOQ A game-theoretic formulation of multi-agent resource
  allocation\BBCQ\
\newblock In Sierra, C., Gini, M., \BBA\ Rosenschein, J.~S.\BEDS, {\Bem
  Proceedings of the Fourth International Conference on Autonomous Agents},
  \BPGS\ 349--356, Barcelona, Catalonia, Spain. ACM Press.

\bibitem[\protect\BCAY{Cavallo}{Cavallo}{2006}]{38}
Cavallo, R. \BBOP2006\BBCP.
\newblock \BBOQ Optimal decision-making with minimal waste: Strategyproof
  redistribution of vcg payments\BBCQ\
\newblock In {\Bem Proc. of the 5th Int. Joint Conf. on Autonomous Agents and
  Multi Agent Systems (AAMAS'06)}, Hakodate, Japan.

\bibitem[\protect\BCAY{Chevaleyre, Dunne, Endriss, Lang, Lemaitre, Maudet,
  Padget, Phelps, Rodriguez-Aguilar,\ \BBA\ Sousa}{Chevaleyre et~al.}{2006}]{7}
Chevaleyre, Y., Dunne, P.~E., Endriss, U., Lang, J., Lemaitre, M., Maudet, N.,
  Padget, J., Phelps, S., Rodriguez-Aguilar, J.~A., \BBA\ Sousa, P.
  \BBOP2006\BBCP.
\newblock \BBOQ Issues in multiagent resource allocation\BBCQ\
\newblock {\Bem Informatica}, {\Bem 30}, 3--31.

\bibitem[\protect\BCAY{Chevaleyre, Endriss, Estivie,\ \BBA\ Maudet}{Chevaleyre
  et~al.}{2005}]{8}
Chevaleyre, Y., Endriss, U., Estivie, S., \BBA\ Maudet, N. \BBOP2005\BBCP.
\newblock \BBOQ Welfare engineering in practice: On the variety of multiagent
  resource allocation problems\BBCQ\
\newblock {\Bem Engineering Societies in the Agents World}, {\Bem V}, 335--347.

\bibitem[\protect\BCAY{Conitzer}{Conitzer}{2009}]{23}
Conitzer, V. \BBOP2009\BBCP.
\newblock {\Bem Auction protocols}.
\newblock Submitted for publication. To appear as chapter in CRC Algorithms and
  Theory of Computation Handbook.

\bibitem[\protect\BCAY{Foley}{Foley}{1967}]{10}
Foley, D.~K. \BBOP1967\BBCP.
\newblock \BBOQ Resource allocation and the public sector\BBCQ\
\newblock {\Bem Yale Economics Studies}, {\Bem 7}, 45--98.

\bibitem[\protect\BCAY{Guo\ \BBA\ Conitzer}{Guo\ \BBA\ Conitzer}{2009}]{37}
Guo, M.\BBACOMMA\  \BBA\ Conitzer, V. \BBOP2009\BBCP.
\newblock \BBOQ Worst-case optimal redistribution of vcg payments in multi-unit
  auctions\BBCQ\
\newblock {\Bem Games and Economic Behavior}, {\Bem 67\/}(1), 69--98.

\bibitem[\protect\BCAY{Koboldt, Maldoom,\ \BBA\ Marsden}{Koboldt
  et~al.}{2003}]{29}
Koboldt, C., Maldoom, D., \BBA\ Marsden, R. \BBOP2003\BBCP.
\newblock \BBOQ The first combinatorial spectrum auction: Lessons from the
  nigerian auction of fixed wireless access licences\BBCQ\
\newblock In {\Bem ITS 14th European Regional Conference}, \BPGS\ 1--35,
  Dotecon Ltd, London, UK. DotEcon.

\bibitem[\protect\BCAY{Lahaie\ \BBA\ Parkes}{Lahaie\ \BBA\ Parkes}{2009}]{36}
Lahaie, S.\BBACOMMA\  \BBA\ Parkes, D.~C. \BBOP2009\BBCP.
\newblock \BBOQ Fair package assignment\BBCQ\
\newblock In {\Bem Proc. First Conf. on Auctions, Market Mechanisms and
  TheirApplications (AMMA 2009)}.

\bibitem[\protect\BCAY{Lamport}{Lamport}{2000}]{12}
Lamport, L. \BBOP2000\BBCP.
\newblock \BBOQ Fairness and hyperfairness\BBCQ\
\newblock {\Bem Distributed Computing}, {\Bem 13\/}(4), 239--245.

\bibitem[\protect\BCAY{Liao\ \BBA\ Hwang}{Liao\ \BBA\ Hwang}{2001}]{25}
Liao, G.-Y.\BBACOMMA\  \BBA\ Hwang, J.-J. \BBOP2001\BBCP.
\newblock \BBOQ A trustworthy internet auction model with verifiable
  fairness\BBCQ\
\newblock {\Bem Internet Research}, 159--166.

\bibitem[\protect\BCAY{Moulin}{Moulin}{2004}]{26}
Moulin, H. \BBOP2004\BBCP.
\newblock {\Bem Fair Division and Collective Welfare}.
\newblock Cambridge, MA:MIT Press.

\bibitem[\protect\BCAY{Narahari\ \BBA\ Dayama}{Narahari\ \BBA\
  Dayama}{2005}]{13}
Narahari, Y.\BBACOMMA\  \BBA\ Dayama, P. \BBOP2005\BBCP.
\newblock \BBOQ Combinatorial auctions for electronic business\BBCQ\
\newblock {\Bem Sadhna}, {\Bem 30}, 179--211.

\bibitem[\protect\BCAY{Nisan}{Nisan}{2000}]{15}
Nisan, N. \BBOP2000\BBCP.
\newblock \BBOQ Bidding and allocation in combinatorial auctions\BBCQ\
\newblock In {\Bem EC '00: Proceedings of the 2nd ACM Conference on Electronic
  Commerce}, \BPGS\ 1--12, New York, NY, USA. ACM.

\bibitem[\protect\BCAY{Rabin}{Rabin}{1993}]{16}
Rabin, M. \BBOP1993\BBCP.
\newblock \BBOQ Incorporating fairness into game theory and economics\BBCQ\
\newblock {\Bem American Economic Review}, {\Bem 83}, 1281--1302.

\bibitem[\protect\BCAY{Rasheedi, Chatterjee,\ \BBA\ Rao}{Rasheedi
  et~al.}{2009}]{30}
Rasheedi, L., Chatterjee, N., \BBA\ Rao, S. \BBOP2009\BBCP.
\newblock \BBOQ Fair resource allocation in distributed combinatorial
  auctioning systems\BBCQ\
\newblock In Garg, V., Wattenhofer, R., \BBA\ Kothapalli, K.\BEDS, {\Bem
  Distributed Computing and Networking: Proceedings of the 10th International
  Conference on Distributed Computing and Networking (ICDCN 2009), Springer
  LNCS 5408}, \BPGS\ 212--217.

\bibitem[\protect\BCAY{Sandholm}{Sandholm}{2002}]{18}
Sandholm, T. \BBOP2002\BBCP.
\newblock \BBOQ Algorithm for optimal winner determination in combinatorial
  auctions\BBCQ\
\newblock {\Bem Artificial Intelligence}, {\Bem 135}, 1--54.

\bibitem[\protect\BCAY{Shapley}{Shapley}{1953}]{27}
Shapley, L.~S. \BBOP1953\BBCP.
\newblock \BBOQ A value of n-person games\BBCQ\
\newblock {\Bem Contributions to the Theory of Games}, {\Bem II}, 307--317.

\bibitem[\protect\BCAY{Shoham\ \BBA\ Leyton-Brown}{Shoham\ \BBA\
  Leyton-Brown}{2009}]{22}
Shoham, Y.\BBACOMMA\  \BBA\ Leyton-Brown, K. \BBOP2009\BBCP.
\newblock {\Bem MULTIAGENT SYSTEMS Algorithmic: Game-Theoretic, and Logical
  Foundations}.
\newblock Cambridge University Press.

\bibitem[\protect\BCAY{Sycara}{Sycara}{1998}]{19}
Sycara, K. \BBOP1998\BBCP.
\newblock \BBOQ Multi-agent systems\BBCQ\
\newblock {\Bem AI Magazine}, {\Bem 10\/}(2), 79--93.

\end{thebibliography}
\end{document}